\documentclass[sigconf,nonacm]{acmart}
\usepackage{lipsum} %NOT ON THE LIST
\usepackage{epstopdf}
\usepackage{amsthm}
\usepackage{xcolor} %for now

\usepackage[ruled,vlined]{algorithm2e}
\usepackage{algpseudocode} %NOT ON THE LIST
\usepackage{appendix}
\newtheorem{lemma}{Lemma}

\newtheorem{problem}{Problem}

\usepackage{ulem}%NOT ON THE LIST
\usepackage{soul}
\newtheorem{theorem}{Theorem}
\usepackage{multirow} 
\usepackage[table]{xcolor}

\AtBeginDocument{\hypersetup{colorlinks=true,citecolor=purple,linkcolor=black,urlcolor=black,filecolor=black}}

\begin{document}
%\title{When Worse Routes Become Better: Preemptive Multi-Route Optimization for Age of Information}
\title{Worse Routes, Fresher Information: Preemption Redefines Which Route is Best}
 \author{Adem Utku Atasayar}
 \email{atasayar.utku@metu.edu.tr}
 \affiliation{%
   \institution{Middle East Technical University}
   \city{Ankara}
   \country{Turkiye}
 }

 \author{Aimin Li}
 \email{aimin@metu.edu.tr}
 \affiliation{%
   \institution{Middle East Technical University}
   \city{Ankara}
   \country{Turkiye}}
   
 \author{Elif Uysal}
 \email{uelif@metu.edu.tr}
 \affiliation{%
   \institution{Middle East Technical University}
   \city{Ankara}
   \country{Turkiye}
 }
\begin{abstract}
This paper studies joint sampling, route activation, and preemption control over $N$ heterogeneous transmission routes under a long-run objective combining Age of Information (AoI) with sampling and transmission costs. We formulate the problem as an average-cost impulse-control model, derive a vector-form integral average-cost optimality equation over a hybrid state space, and establish structural reductions on the state. We consider a two-route case and compute optimized policies using discretized policy iteration. Surprisingly, the results show that preemption can reverse routing preferences: \textit{the optimized policy may exclusively use a route with both a larger mean delay and a larger delay variance while avoiding the seemingly superior route.} Relative to baselines that do not allow preemption, the proposed policy reduces average AoI by up to $96\%$.
\end{abstract}
%CHECK
\begin{CCSXML}
<ccs2012>
   <concept>
       <concept_id>10002951</concept_id>
       <concept_desc>Information systems</concept_desc>
       <concept_significance>500</concept_significance>
       </concept>
   <concept>
       <concept_id>10003033.10003068.10003073.10003077</concept_id>
       <concept_desc>Networks~Network design and planning algorithms</concept_desc>
       <concept_significance>500</concept_significance>
       </concept>
   <concept>
       <concept_id>10003033.10003079.10011672</concept_id>
       <concept_desc>Networks~Network performance analysis</concept_desc>
       <concept_significance>300</concept_significance>
       </concept>
 </ccs2012>
\end{CCSXML}

\ccsdesc[500]{Information systems}
\ccsdesc[500]{Networks~Network design and planning algorithms}
\ccsdesc[300]{Networks~Network performance analysis}
\keywords{Age of Information, multi-route routing, preemption, impulse control, freshness optimization, policy iteration}
\maketitle

\section{Introduction}

\subsection{Background and Motivation}

Age of Information (AoI) has emerged as a fundamental metric for characterizing information freshness in real-time networked systems \cite{kosta2017age}. AoI quantifies the staleness of the most recently delivered update at the receiver and captures whether the information available at the destination remains timely enough to support {downstream} estimation
\cite{sun2020remote,tang22mobi, aiminfusion,ornee2021sampling}, feedback control \cite{soleymani2023feedback,ayan2021age}, and decision-making \cite{11393542,dong2018age}. This makes AoI particularly relevant to goal-oriented applications
\cite{9919752,10579545}, including autonomous driving \cite{kaul2011minimizing,nguyen2023vehicular}, drone and robot monitoring \cite{long2025lyapunov,liang2023age}, and real-time health monitoring \cite{ling2022age}; see \cite{yates2021age} for a comprehensive review. These goal-oriented services are increasingly deployed over integrated Terrestrial and Non-Terrestrial Networks (TN-NTN), where \textit{heterogeneous} communication routes create new opportunities for maintaining information freshness.

\textit{Routing and path diversity} have therefore become important for
controlling information freshness. In \cite{kam2016effect}, Kam \textit{et al.} showed that using multiple
paths can reduce AoI, but also introduces the challenge of
\textit{out-of-order} deliveries. Subsequent parallel-server and multi-server models quantified the benefits of
path diversity \cite{yates2018status,akar2025single}. However, most assume \textit{fixed,
prescribed} rules, such as preemptive LCFS routing over parallel paths
\cite{yates2018status} and predefined dual-server policies
\cite{akar2025single}, rather than optimizing routing decisions. 

Recent work instead treats routing as a control dimension for minimizing long-term average AoI \cite{akar2025agedependent,atasayar2025fresh,atasayar2025ageoptimal}. In \cite{atasayar2025fresh,atasayar2025ageoptimal}, joint sampling and routing are formulated over heterogeneous communication routes as a continuous-time SMDP, with a continuous waiting decision and a routing choice, leading to a threshold-based route-handover policy. In \cite{akar2025agedependent}, \textit{Akar et al.} studied discrete-time age-dependent server selection in a non-preemptive multi-server generate-at-will model with heterogeneous delay distributions and transmission costs, where waiting and server selection decisions are made only when the system is idle.

However, existing age-optimal routing formulations are developed mainly under an idealized \textit{non-preemptive}, \textit{single-active-route} setting
\cite{akar2025agedependent,atasayar2025ageoptimal,atasayar2025fresh}: 
\textit{once a route is selected for transmitting an update, the source must wait until the corresponding service is completed before making the next routing decision.} This assumption excludes two important capabilities that are relevant to modern networks: ($i$) \textit{parallel use of multiple routes} and ($ii$) proactive preemption of stale in-service updates.

Motivated by this gap, we study a continuous-time freshness-control problem in which the source jointly decides ($i$) when to sample a fresh update, ($ii$) which subset of routes to activate, allowing simultaneous use of multiple routes, and ($iii$) which in-service packets, if any, to preempt by replacing them with the fresh update. The contributions of this work are summarized below.
\subsection{Contributions of This Work}
\begin{itemize}
    \item \textbf{System model.} We introduce the first AoI-optimal routing framework that jointly permits simultaneous multi-route transmission and preemption of in-service packets, with heterogeneous delay distributions and explicit sampling and transmission costs. This generalizes prior formulations that permit at most one active route and require each transmission to complete before the next routing decision \cite{atasayar2025fresh,akar2025agedependent,atasayar2025ageoptimal} and extends \cite{li2026taming} from a single route to a multi-route setting with asynchronously evolving in-service packets.

    \item \textbf{Impulse-control formulation and solution.} Prior AoI optimization formulations commonly rely on renewal-reward processes or embedded SMDPs, with decisions structured around renewal or service-completion epochs \cite{sun2017update,atasayar2025fresh}. Allowing ongoing transmissions to be preempted at controller-chosen times requires accounting for interventions before service completion. We formulate the problem as an average-cost impulse-control problem and derive a compact vector-form integral average-cost optimality equation (ACOE) for arbitrary $N$. We establish structural reductions for packet discarding and service-age representation for arbitrary $N$, applying them in discretized policy iteration for $N=2$.

    \item \textbf{Simulations and insights.} Numerical results for $N = 2$, evaluated against a non-preemptive single-route benchmark \cite{atasayar2025fresh}, a two-route replication benchmark that we derive by extending \cite{sun2017update}, and a zero-wait baseline, demonstrate substantial freshness-cost improvements. They further show that, under preemptive operation, routing preferences can reverse, favoring a route despite its larger mean delay and delay variance.
\end{itemize}

\section{System Model and Problem Formulation}
\label{system_model}

\begin{figure}[htbp]
    \centering
    \includegraphics[width=\linewidth]{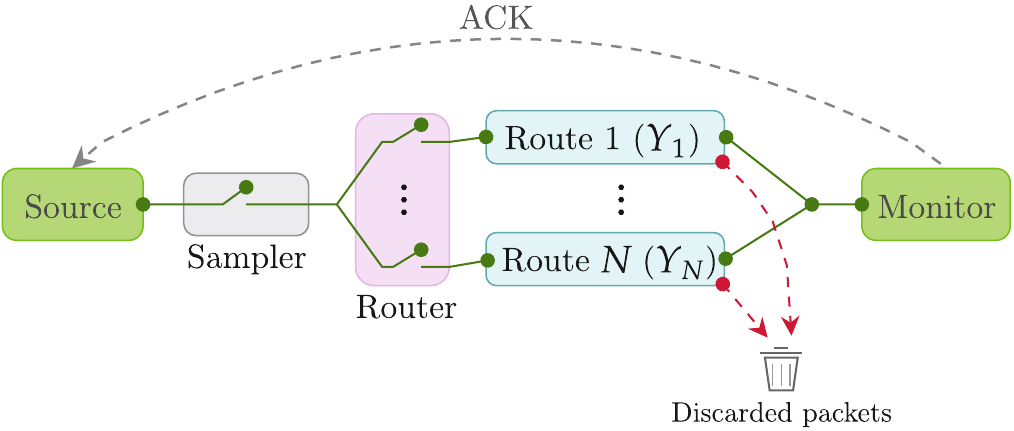}
    \caption{Status-update system with $N$ heterogeneous routes and random service times $Y_i$. A fresh update is replicated over the activated routes, replacing any packets in service.}
    \label{fig:system_model}
\end{figure}

We consider a continuous-time status-update system with a single source-destination pair and $N$ heterogeneous transmission routes, indexed by $i\in\mathcal N\triangleq\{1,\ldots,N\}$. The source aims to minimize a long-run freshness-cost objective that balances AoI against sampling and transmission costs. In contrast to existing non-preemptive, single-active-route models \cite{atasayar2025fresh,akar2025agedependent,atasayar2025ageoptimal}, our formulation allows simultaneous transmission over multiple routes and preemption of ongoing transmissions.

\subsection{Age of Information}
Let $U(t)$ denote the generation time of the freshest packet successfully delivered to the destination by time $t$. The AoI at the destination is then defined as
\begin{equation}
\Delta(t) = t - U(t), \qquad t \ge 0.
\end{equation}
\subsection{Service Model}

Each route can carry at most one packet at a time. The controller transmits only freshly sampled updates, replicating the same update over all routes selected at an intervention. Let $Y_i$ denote the random transmission delay on route~$i\in\mathcal N$. The route delays have general distributions supported on $\mathbb{R}_+$, with $\mathbb{E}[Y_i]<\infty$ and $\mathbb{E}[Y_i^2]<\infty$, and are independent across routes and successive transmissions.

\subsection{State Space}
Let $\mathbf m(t)\in\{0,1\}^N$ denote the route-occupancy vector, where $m_i(t)=1$ if route~$i$ has a packet in service and $m_i(t)=0$ otherwise. Thus, $\mathbf m(t)=\mathbf0$ represents an idle system.

For each busy route, let $b_i(t)$ denote the age of its in-service packet, which equals its elapsed service time. We set $b_i(t)=0$ for idle routes and collect these ages in $\mathbf b(t)=(b_1(t),\ldots,b_N(t))$. The hybrid state is
$
x(t)=\bigl(\Delta(t),\mathbf m(t),\mathbf b(t)\bigr).
$
We omit the time argument when referring to the current state.

\subsection{State Evolution}

Between service completions and control interventions, occupancy remains unchanged, while $\dot{\Delta}(t)=1$ and $\dot b_i(t)=m_i(t)$. When a fresh update is transmitted on route~$i$, we set $m_i(t^+)=1$ and $b_i(t^+)=0$.

When a packet on route~$i$ completes service, that route becomes idle, with $m_i(t^+)=0$ and $b_i(t^+)=0$, and the AoI becomes
\begin{equation}
\Delta(t^+)=\min\{\Delta(t^-),\, b_i(t^-)\},
\end{equation}
Here, $t^-$ and $t^+$ denote the pre- and post-jump instants. A stale completion does not affect the AoI. Other routes continue service unless their packets are discarded under the discard-after-completion rule justified in Lemma~\ref{lem:discard_on_completion}.

\subsection{Control Actions}

The controller jointly decides when to sample a fresh update, which subset of routes to activate, and which in-service packets, if any, to preempt. These decisions are made at packet-delivery epochs and controller intervention epochs.

At each decision epoch, the controller selects a control pair $(\tau,\mathbf a)$, where $\tau\geq0$ is the waiting time until the next intended intervention and $\mathbf a=(a_1,a_2,\ldots,a_N)\in \mathcal A\triangleq\{0,1\}^N\setminus\{\mathbf0\}$ is the route-activation vector. Here, $a_i=1$ means that the fresh update is transmitted on route~$i$, whereas $a_i=0$ leaves that route unchanged. If a selected route is busy, its packet in service is preempted and replaced by the fresh update.

Specifically, if the system is in state $x$ at time $t$ and the controller selects $(\tau,\mathbf a)$, the system evolves naturally on $(t,t+\tau)$. If no service completion occurs during this interval, action $\mathbf a$ is applied at time $t+\tau$. If a completion occurs earlier at some $t'\in(t,t+\tau)$, the planned intervention is canceled, the state jumps at $t'$, and a new control pair is selected.

\subsection{Problem Formulation}

The controller balances information freshness against operational costs: a sampling cost $c_s > 0$ is incurred for each fresh packet generated, and a transmission cost $c_{tx} > 0$ is incurred for each route transmission. Let $N_s(T)$ and $N_{tx}(T)$ denote the number of fresh samplings and the total number of route transmissions over $[0,T]$, respectively. An admissible policy $\pi \in \Pi$ specifies, at each decision epoch, a control pair $(\tau,\mathbf a)$ as a function of the current state. We seek a policy that minimizes the long-run average cost, as formalized below.
\begin{problem}[Long-Run Average-Cost Optimization]
\label{main_problem}
\begin{equation}
\rho^\star
\triangleq
\inf_{\pi\in\Pi}
\limsup_{T\to\infty}
\frac{1}{T}
\mathbb{E}_\pi
\left[
\int_0^T \Delta(t)\,dt
+ c_s N_s(T)
+ c_{tx} N_{tx}(T)
\right].
\end{equation}
\end{problem}
Any policy attaining the infimum in Problem~\ref{main_problem} is called optimal.

This is a long-run average-cost impulse-control problem on a hybrid state space. At each decision epoch, the controller selects both the waiting time until the next intended intervention and the corresponding transmission action, while route completions may occur earlier and trigger a new decision epoch.

\section{Average-Cost Optimality Equation}
\label{sec:acoe}

Because controller-chosen interventions compete with random route completions, the problem cannot be reduced to renewal or service-completion epochs. Accordingly, we first characterize the residual-life and first-completion kernels and then derive an integral ACOE that accounts for both event types.

\subsection{Residual-Life and First-Completion Kernels}

For the remainder of the analysis, assume that the route-delay distributions are absolutely continuous. For each $i\in\mathcal N$, let $F_i$, $f_i$, and $\bar F_i=1-F_i$ denote the CDF, density, and survival function of $Y_i$. For a service age $b_i$ satisfying $\bar F_i(b_i)>0$, the residual-life survival function and density are
\begin{equation}
\bar F_{i\mid b_i}(t)
=\frac{\bar F_i(b_i+t)}{\bar F_i(b_i)},
\qquad
f_{i\mid b_i}(t)
=\frac{f_i(b_i+t)}{\bar F_i(b_i)},
\quad t\geq 0.
\label{eq:residual_defs}
\end{equation}

For occupancy vector $\mathbf m\in\{0,1\}^N$, define the probability that no busy route completes within the next $t$ time units as
\begin{equation}
S_{\mathbf m}(t\mid\mathbf b)
\triangleq
\prod_{j=1}^{N}
\bigl[\bar F_{j\mid b_j}(t)\bigr]^{m_j}.
\label{eq:Nroute_survival}
\end{equation}
The density that Route~$i$ is the first busy route to complete at time $t$ is
\begin{equation}
q_i(t\mid\mathbf m,\mathbf b)
\triangleq
m_i f_{i\mid b_i}(t)
\prod_{j\neq i}
\bigl[\bar F_{j\mid b_j}(t)\bigr]^{m_j}.
\label{eq:Nroute_first_completion}
\end{equation}
We also define
\begin{equation}
\begin{aligned}
A_{\mathbf m}(\tau\mid\mathbf b)
&\triangleq
\int_0^\tau S_{\mathbf m}(t\mid\mathbf b)\,dt,\\
J_{\mathbf m}(\tau\mid\mathbf b)
&\triangleq
\int_0^\tau tS_{\mathbf m}(t\mid\mathbf b)\,dt.
\end{aligned}
\label{eq:Nroute_AJ}
\end{equation}

\subsection{Integral ACOE}

Let $\mathbf e_i$ denote the $i$-th standard basis vector. If Route~$i$ completes at time $t$ before the planned intervention, the post-completion state is
\begin{equation}
\Psi_i(x,t)
\triangleq
\left(
\min\{\Delta+t,b_i+t\},
\mathbf m-\mathbf e_i,
(\mathbf b+t\mathbf m)\odot(\mathbf 1-\mathbf e_i)
\right),
\label{eq:Nroute_completion_map}
\end{equation}
where $\odot$ denotes element-wise multiplication.

Suppose the controller plans to intervene after $\tau\geq0$ using activation vector $\mathbf a\in\mathcal A$. If no completion occurs before $\tau$, the state becomes $x_\tau=(\Delta+\tau,\mathbf m,\mathbf b+\tau\mathbf m)$, and the intervention transition is
\begin{equation}
\Gamma_{\mathbf a}(x_\tau)
\triangleq
\left(
\Delta+\tau,\,
\mathbf m\vee\mathbf a,\,
(\mathbf1-\mathbf a)\odot(\mathbf b+\tau\mathbf m)
\right),
\label{eq:Nroute_intervention_map}
\end{equation}
where $\vee$ denotes element-wise logical OR. The intervention cost is
\begin{equation}
K(\mathbf a)\triangleq c_s+c_{tx}\|\mathbf a\|_1.
\label{eq:Nroute_intervention_cost}
\end{equation}

Solving Problem~\ref{main_problem} is challenging because the controller must jointly choose when to intervene and how to coordinate transmissions across asynchronously evolving routes. Under general service-time distributions, completion prospects depend on the elapsed service ages, and a random completion may precede a planned intervention and alter the next decision state. These coupled dynamics require accounting for both controller-chosen interventions and random completion events. Following the average-cost impulse-control framework of \cite{stettner2022approximation}, we obtain the following integral ACOE.
\begin{theorem}[Integral ACOE]
\label{thm:Nroute_acoe}
The ACOE for the relative value function $h$ and optimal average cost $\rho^\star$ takes the form
\begin{align}
h(\Delta,\mathbf m,\mathbf b)
=
\inf_{\tau\geq0}
\Bigg\{&
(\Delta-\rho^\star)A_{\mathbf m}(\tau\mid\mathbf b)
+J_{\mathbf m}(\tau\mid\mathbf b)
\nonumber\\
&+
\sum_{i:m_i=1}
\int_0^\tau
q_i(t\mid\mathbf m,\mathbf b)
h\bigl(\Psi_i(x,t)\bigr)\,dt
\nonumber\\
&+
S_{\mathbf m}(\tau\mid\mathbf b)
\inf_{\mathbf a\in\mathcal A}
\left[
K(\mathbf a)
+h\bigl(\Gamma_{\mathbf a}(x_\tau)\bigr)
\right]
\Bigg\}.
\label{eq:Nroute_acoe}
\end{align}
\end{theorem}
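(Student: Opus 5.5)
The plan is to derive the equation by a one-step (dynamic-programming) decomposition of the relative cost over a single decision cycle, as in the average-cost impulse-control framework of \cite{stettner2022approximation}. We take that framework as given: under its regularity and stability hypotheses, an optimal average cost $\rho^\star$ and a relative value function $h$ exist and satisfy a Bellman equation of the form
\[
h(x)=\inf_{\tau,\mathbf a}\mathbb E_x\Big[\int_0^{\sigma}(\Delta(t)-\rho^\star)\,dt+\mathbf 1\{\sigma=\tau\}K(\mathbf a)+h(x_{\sigma^+})\Big],
\]
where $\sigma=\min\{\tau,T_1\}$ and $T_1$ is the first service completion among the busy routes.

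The first step is to check that our model fits that setting. Between events the state evolves deterministically, with $\Delta$ and the busy $b_i$ increasing at unit rate. Random jumps occur only at completions, which have intensity $f_i(b_i+t)/\bar F_i(b_i+t)$. Controlled impulses $\Gamma_{\mathbf a}$ have costs $K(\mathbf a)\ge c_s>0$. This positive lower bound rules out infinitely many interventions in finite time. Absolute continuity of the $F_i$, together with the finite second moments, gives the integrability the framework needs.

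The second step computes each term of the expectation explicitly, conditioning on $T_1$. By independence of the routes and the memoryless-type conditioning of \eqref{eq:residual_defs}, we have $\Pr(T_1>t\mid x)=S_{\mathbf m}(t\mid\mathbf b)$. The density of the event that route $i$ completes first at $t$ is $q_i(t\mid\mathbf m,\mathbf b)$, and ties occur with probability zero. For the running cost we use
\[
\int_0^\sigma(\Delta+t-\rho^\star)\,dt=\int_0^\tau \mathbf 1\{T_1>t\}(\Delta+t-\rho^\star)\,dt .
\]
Taking expectations via Fubini gives $(\Delta-\rho^\star)A_{\mathbf m}(\tau\mid\mathbf b)+J_{\mathbf m}(\tau\mid\mathbf b)$. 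On the event $\{T_1=t<\tau$, route $i$ first$\}$, the post-jump state is $\Psi_i(x,t)$ from \eqref{eq:Nroute_completion_map}, which yields the sum of integrals against $q_i$. On $\{T_1\ge\tau\}$, which has probability $S_{\mathbf m}(\tau\mid\mathbf b)$, the intervention happens. It incurs $K(\mathbf a)$ and moves the state to $\Gamma_{\mathbf a}(x_\tau)$. Because $\mathbf a$ is chosen at time $\tau$ with the deterministic state $x_\tau$ known, the infimum over $\mathbf a$ can be taken inside, multiplied by $S_{\mathbf m}$. The idle case $\mathbf m=\mathbf 0$ is consistent with this: then $S\equiv1$, there are no $q_i$ terms, and $A=\tau$, $J=\tau^2/2$.

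The third step justifies the reductions used above. We take the infimum over deterministic $\tau$: between events no new information arrives, so a general stopping time reduces to a deterministic one. We also allow $\tau=\infty$, in which case the bracketed term becomes the limit as $\tau\to\infty$; this is well defined when $\mathbf m\neq\mathbf0$ because $S_{\mathbf m}(\tau)\to0$. The main obstacle is the existence of the pair $(h,\rho^\star)$ satisfying the Bellman equation on this noncompact hybrid space, which requires growth and uniform-ergodicity conditions. Our first attempt would be to verify Stettner's conditions using a vanishing-discount argument. Specifically, we would bound discounted value differences $h_\beta(x)-h_\beta(x_0)$ through a coupling: from any state, the policy "activate all routes now" reaches a neighborhood of an idle reference state within a time whose moments are controlled by $\mathbb E[Y_i^2]<\infty$. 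This gives local boundedness and equicontinuity of $h_\beta$, uniformly in $\beta$, and the limit then satisfies the equation above.
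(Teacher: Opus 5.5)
Your proposal is correct, and its Step 2 is exactly the paper's proof. You condition on the first completion time via $S_{\mathbf m}(t\mid\mathbf b)$ and $q_i(t\mid\mathbf m,\mathbf b)$ (ties have probability zero), obtain $(\Delta-\rho^\star)A_{\mathbf m}+J_{\mathbf m}$ from the running cost, the completion terms through $\Psi_i(x,t)$, and the intervention term $S_{\mathbf m}(\tau\mid\mathbf b)\inf_{\mathbf a\in\mathcal A}[K(\mathbf a)+h(\Gamma_{\mathbf a}(x_\tau))]$, and then minimize over $\tau$.

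Your Steps 1 and 3 go beyond the paper: fitting the model to Stettner's hypotheses, reducing stopping times to deterministic $\tau$, and the vanishing-discount existence sketch. The paper takes the framework of \cite{stettner2022approximation} as given and only derives the form of the equation.
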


\begin{proof}
See Appendix~\ref{app:proof_Nroute_acoe}.
\end{proof}

\subsection{Structural Reductions}
\label{subsec:structural_reductions}

The following properties justify the replacement-based control actions and simplify the post-completion states and numerical representation.

\begin{lemma}[Preemption and discard after completion]
\label{lem:discard_on_completion}
Assume that packet discarding incurs no cost.
\begin{enumerate}
    \item Discarding an in-service packet and transmitting a fresh replacement later offers no advantage over retaining the packet until replacement and then preempting it if it is still in service.
    \item After a service completion updates the destination AoI to $\Delta^+$, immediately discarding every remaining packet with $b_j\geq\Delta^+$ is weakly optimal.
\end{enumerate}
Here, weakly optimal means that discarding does not increase the optimal cost.
\end{lemma}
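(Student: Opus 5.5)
Both parts will be proved by pathwise coupling (sample-path dominance). For any policy that discards, I will build a policy that does not discard (part 1), or that discards the stale packets (part 2). The two policies will run on the same realizations of the route delays and will make the same sampling and activation decisions. I will then show that the AoI trajectory of the constructed policy is pointwise no larger and that its cost counters $N_s(T)$ and $N_{tx}(T)$ are identical. Because discarding is free, taking expectations and the $\limsup$ of the time averages then gives the claimed weak dominance in Problem~\ref{main_problem}. The key monotonicity fact comes straight from the completion rule $\Delta(t^+)=\min\{\Delta(t^-),b_i(t^-)\}$: delivering an additional packet can only lower the AoI, and delivering a packet with $b_i\ge\Delta$ leaves it unchanged.

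\textbf{Part 1.} Let $\pi$ discard the packet on route $i$ at time $t_d$ and later, at time $t_r>t_d$, transmit a fresh update on route $i$ (or never do so). Define $\tilde\pi$ to keep the packet in service on $[t_d,t_r)$ and otherwise act exactly like $\pi$. At $t_r$, $\tilde\pi$ applies the same activation vector $\mathbf a$; the action semantics in $\Gamma_{\mathbf a}$ then preempt the old packet if it is still in service. Couple the delay of the fresh packet at $t_r$ across the two systems. Route $i$ is then in an identical state under both policies from $t_r$ onward. The only difference is that under $\tilde\pi$ the retained packet may complete in $(t_d,t_r)$, which weakly lowers $\Delta$ on that interval and, by induction along the path, thereafter, since $\min$ is monotone and AoI otherwise grows at unit rate. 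The intervention cost $K(\mathbf a)$ is charged identically in both systems. Applying this construction to each discard event in turn (countably many, handled inductively over decision epochs) removes all pure discards.

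One subtlety must be handled: an earlier completion under $\tilde\pi$ triggers an extra decision epoch and changes the state $x$ seen by the controller. I will make $\tilde\pi$ history-dependent, replaying $\pi$'s planned $(\tau,\mathbf a)$ schedule as a function of $\pi$'s own simulated state. Then I will invoke the standard fact that the optimal average cost over history-dependent policies equals the value in Theorem~\ref{thm:Nroute_acoe}, so that the weak inequality transfers to $\rho^\star$.

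\textbf{Part 2.} After a completion sets $\Delta^+$, any remaining packet with $b_j\ge\Delta^+$ satisfies $b_j(t)\ge\Delta(t)$ for all later $t$ until the next completion. This holds because both quantities grow at unit rate, and any further completion only lowers $\Delta$. Such a packet can therefore never reduce the AoI on completion, so its presence is irrelevant to the AoI path. Under the discarding policy, route $j$ becomes idle instead. Take any policy $\pi$ on the original system and let $\hat\pi$ mimic it on the discarded system. Whenever $\pi$ preempts route $j$, $\hat\pi$ transmits on the idle route $j$, at the same cost $c_{tx}$. The AoI paths coincide, so the costs coincide and the optimal cost does not increase.

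The main obstacle is the information-pattern mismatch noted above. The constructed policy observes different completion events than the original and must be shown to be admissible in $\Pi$ while replicating $\pi$'s decisions. I would address this first by the replay construction, and if necessary by restricting to policies where the extra completions cause no replanning, which is without loss since the extra events only help.
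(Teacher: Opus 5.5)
Your proposal is correct and takes essentially the paper's approach. Part~1 is the same retain-and-replay argument as the paper's ``ignoring any additional delivery when determining those decisions,'' driven by the monotonicity of $\Delta(t^+)=\min\{\Delta(t^-),b_i(t^-)\}$ with identical intervention costs, and Part~2 is the same observation that a packet with $b_j\ge\Delta^+$ stays no fresher than the destination while free discarding does not block later transmissions. Your attention to the information pattern is more careful than the paper, which handles it only implicitly in Part~1 and not at all in Part~2. One caveat for Part~2: there the mimicking system observes fewer completions than the original, so your ``extra events only help'' justification does not apply. The replay instead needs private randomization to simulate the discarded packets' completion times, together with the fact that randomization does not lower the optimal average cost.
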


\begin{proof}
For the first claim, consider a policy that discards a packet before transmitting its replacement. Retain it while reproducing the same sampling and transmission decisions, ignoring any additional delivery when determining those decisions. Retention costs nothing, does not affect other routes or prevent replacement at the originally planned time, and any delivery of the retained packet can only improve freshness. Thus, retention is no worse.

For the second claim, a packet with $b_j\geq\Delta^+$ is no fresher than the destination information. Its age and the destination AoI increase at the same rate, while subsequent deliveries can only reduce the AoI. Hence, this packet can never improve freshness. Cost-free discarding does not prevent future transmissions and is therefore no worse.
\end{proof}

The first part justifies preemption over a separate discard-then-send action. The second part gives a discard-after-completion rule that reduces the occupancy state. For a state $x=(\Delta,\mathbf m,\mathbf b)$, define
\[
r_j(x)\triangleq m_j\mathbf 1_{\{b_j<\Delta\}},
\qquad
R(x)\triangleq
\bigl(\Delta,\mathbf r(x),\mathbf b\odot\mathbf r(x)\bigr).
\]
The reduced post-completion transition is therefore
\[
\widetilde{\Psi}_i(x,t)
\triangleq R\bigl(\Psi_i(x,t)\bigr).
\]
This transition clears the completing route and any stale remaining packets, restricting the numerical state space to $0\leq b_i\leq\Delta$ for busy routes, with equality retained to accommodate initialization. The general ACOE retains its form, with $\widetilde{\Psi}_i$ used in place of $\Psi_i$ for the reduced model.

\begin{lemma}[Full-occupancy entry and age-gap invariance]
\label{lem:full_occupancy_geometry}
Whenever the system enters full occupancy, $\mathbf m=\mathbf1$, through an admissible action, at least one service age is zero:
\[
\min_{i\in\mathcal N} b_i=0.
\]
During any interval without a completion or intervention, the service-age difference $b_j-b_i$ remains constant for every pair of busy routes $i,j$.
\end{lemma}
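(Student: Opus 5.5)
The argument rests on the two transition maps that change occupancy, $\Psi_i$ in \eqref{eq:Nroute_completion_map} (or its reduced version $\widetilde{\Psi}_i$) and $\Gamma_{\mathbf a}$ in \eqref{eq:Nroute_intervention_map}, together with the continuous flow between events, under which $\dot b_i=m_i$. Both claims follow from examining how each of these changes $\mathbf m$ and $\mathbf b$.

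\textbf{First claim.} I will determine which transitions can produce $\mathbf m=\mathbf 1$ from a state with $\mathbf m\neq\mathbf 1$.

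\emph{Flow and completions.} Between events the occupancy vector is constant. A completion $\Psi_i$ replaces $\mathbf m$ by $\mathbf m-\mathbf e_i$, and the reduction $R$ can only zero further entries. So neither flow nor completion can create a new busy route, and full occupancy can only be entered through an intervention.

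\emph{Interventions.} Suppose $\mathbf m\neq\mathbf 1$ before an intervention $\Gamma_{\mathbf a}$ and $\mathbf m\vee\mathbf a=\mathbf 1$. Then some route $k$ has $m_k=0$, and full occupancy forces $a_k=1$. Since $\Gamma_{\mathbf a}$ sets the service ages to $(\mathbf1-\mathbf a)\odot(\mathbf b+\tau\mathbf m)$, we get $b_k=0$ right after the intervention. Hence $\min_i b_i=0$ at the entry instant.

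\emph{Initialization.} If the process starts from an idle state, no separate argument is needed. I will note that "enters" refers to the transition instant, so an arbitrary initial condition with $\mathbf m=\mathbf1$ is not covered by the claim.

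\textbf{Second claim.} Fix an interval with no completion and no intervention.

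\emph{Occupancy is fixed.} On this interval $\mathbf m$ is constant, since occupancy changes only through $\Psi_i$ or $\Gamma_{\mathbf a}$.

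\emph{Ages grow in lockstep.} For busy routes $i,j$ we have $\dot b_i=m_i=1=m_j=\dot b_j$. Explicitly, $\mathbf b(t+s)=\mathbf b(t)+s\mathbf m$, which is exactly the form $x_\tau=(\Delta+\tau,\mathbf m,\mathbf b+\tau\mathbf m)$ appearing in the ACOE. Therefore $\tfrac{d}{ds}(b_j-b_i)=0$, and the difference is constant on the interval.

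\textbf{Where care is needed.} Neither claim involves a real obstacle. The one point requiring attention is the first claim's reliance on the fact that only interventions can raise occupancy. This depends on the completion map and the reduction $R$ of Lemma~\ref{lem:discard_on_completion} only ever clearing entries, so I will check that these maps are monotone non-increasing in $\mathbf m$.
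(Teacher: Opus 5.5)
Your proof is correct and follows the paper's argument: flow, completions and the stale-packet reduction $R$ can only clear occupancy, so full occupancy is reached only through an intervention that resets an activated route's age to zero, and for busy routes $\dot b_i=\dot b_j=1$ between events. The paper also explicitly covers interventions made from an already-full state (``entering or preserving''), which its $(\Delta,d)$ representation relies on; your argument covers this too once you note that $\mathbf a\in\mathcal A$ excludes $\mathbf 0$, so some $a_k=1$ and hence $b_k=0$.
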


\begin{proof}
A completion, including subsequent stale-packet discards, cannot create full occupancy. Thus, entering or preserving full occupancy through an intervention requires a fresh transmission, resetting at least one service age to zero. Between events, each busy route satisfies $\dot b_i=1$, so
$\frac{d}{dt}(b_j-b_i)=0$.
\end{proof}

For $N=2$, the condition $\min_i b_i=0$ implies that the signed age gap $d\triangleq b_2-b_1$ determines both service ages at full-occupancy entry. Such states are therefore represented by $(\Delta,d)$, and subsequent service ages are reconstructed from $d$ and the elapsed time in the numerical method below.

\section{Numerical Solutions}
\label{sec:numerical_solutions}

The ACOE in \eqref{eq:Nroute_acoe} applies to an arbitrary number of routes. We next present a two-route case study to illustrate its numerical implementation and examine the resulting policy behavior. For this case study, the numerical procedure builds on the average-cost policy-iteration approach of \cite{li2026taming} and incorporates the structural reductions established in Section~\ref{subsec:structural_reductions}.

We truncate the age variables to $[0,y_{\max}]$, discretize them on a grid $y_g$ with spacing $\Delta t$, and retain states satisfying $b_i\leq\Delta$ for busy routes. Completion transitions use $\widetilde{\Psi}_i$ to discard stale packets. For $\mathbf m=(0,0)$, the state is indexed by $\Delta$; when $\|\mathbf m\|_1=1$, it is indexed by $(\Delta,b_i)$ for the busy route $i$.

By Lemma~\ref{lem:full_occupancy_geometry}, full-occupancy entry states are indexed by $(\Delta,d)$, where $d=b_2-b_1$. After waiting for $\theta$ without an intervening event, the service ages are
\begin{equation}
b_1(\theta;d)=(-d)^+ + \theta,
\qquad
b_2(\theta;d)=d^+ + \theta.
\label{eq:B12_reconstruct}
\end{equation}
The residual-life and first-completion quantities in \eqref{eq:residual_defs}--\eqref{eq:Nroute_AJ} are precomputed on the grids. Intervention thresholds are searched over a hybrid grid $\Theta$ that is uniform near the origin and logarithmically spaced in the tail.

For each stationary policy, the discretized ACOE forms a sparse linear policy-evaluation system, solved with a normalization condition. Policy improvement then minimizes the discretized ACOE operator over the admissible activation vectors and waiting thresholds. Following \cite{li2026taming}, we use a linear far-field closure with slope
\[
s_h \approx \mathbb E[\min\{Y_1,Y_2\}]
      = \int_0^\infty \bar F_1(t)\bar F_2(t)\,dt.
\]
The overall procedure is summarized in Algorithm~\ref{alg:two_route_pi}.

\begin{algorithm}[t]
\caption{Policy Iteration for the $N=2$ Specialization}
\label{alg:two_route_pi}
\KwIn{$y_{\max},\Delta t,\Theta,(Y_1,Y_2),(c_s,c_{tx}),\varepsilon_\rho$}
\KwOut{$(\rho,h,\pi)$}

Construct the discretized state space for $\mathbf m\in\{0,1\}^2$,
retaining $b_i\leq\Delta$ for busy routes\;
Precompute the quantities in \eqref{eq:residual_defs}--\eqref{eq:Nroute_AJ}\;
Initialize a feasible stationary policy $\pi^{(0)}$\;

\For{$k=0,1,2,\dots$}{
    \tcp{Policy evaluation}
    Form the sparse linear system corresponding to \eqref{eq:Nroute_acoe}
    under $\pi^{(k)}$, using reduced completion transitions
    $\widetilde{\Psi}_i$ and reconstruction \eqref{eq:B12_reconstruct}\;
    Impose the normalization $h(0,\mathbf 0,\mathbf 0)=0$ and the far-field linear closure\;
    Solve for $(\rho^{(k)},h^{(k)})$\;

    \tcp{Policy improvement}
    \ForEach{$x=(\Delta,\mathbf 0,\mathbf 0)$}{
        Update $\pi^{(k+1)}(x)$ by minimizing the discretized one-step operator
        over $z\in y_g$ with $z\ge \Delta$ and $\mathbf a\in\mathcal A$\;
    }
    \ForEach{$x=(\Delta,\mathbf m,\mathbf b)$ with $\mathbf m\neq\mathbf 0$}{
        Update $\pi^{(k+1)}(x)$ by minimizing the discretized one-step operator
        over $\theta\in\Theta$ and $\mathbf a\in\mathcal A$\;
    }

    \If{$\pi^{(k+1)}=\pi^{(k)}$ \textbf{or} $\bigl(k\geq1 \text{ and } |\rho^{(k)}-\rho^{(k-1)}|\leq\varepsilon_\rho\bigr)$}{
        \Return{$(\rho^{(k)},h^{(k)},\pi^{(k)})$}\;
    }
}
\end{algorithm}

\section{Simulation Results}
\label{sec:sim_results}

In this section, we evaluate the proposed policy for the $N=2$ specialization of our model against several benchmark schemes. We describe the benchmark policies and simulation settings, then present and discuss the numerical results.

\subsection{Benchmark Policies}
\label{subsec:benchmarks}

We compare the proposed policy with the following benchmark schemes.

\begin{itemize}
    \item \textbf{Single-active-route non-preemptive benchmark (JSR) \cite{atasayar2025fresh}:} This benchmark follows the JSR policy in \cite{atasayar2025fresh}. The sender jointly optimizes when to generate a new update and which route to use, but it is restricted to a single active route at any time. Once an update is transmitted on a selected route, the sender waits until its service completion before making the next decision. Thus, JSR captures optimized waiting and route selection under a non-preemptive single-active-route constraint.
    
    \item \textbf{Replicated update-or-wait benchmark (UoWV):} This benchmark is our two-route extension of the update-or-wait policy in \cite{sun2017update}. Each update is replicated over both routes and delivered when the first copy completes, after which the remaining copy is discarded. The effective service time is $\min\{Y_1,Y_2\}$, and waiting is optimized to minimize average AoI. Each update incurs one sampling cost and two transmission costs. Preemption is not allowed.

    \item \textbf{Best fixed-route zero-wait benchmark (ZW):} Under this benchmark, a new update is generated and transmitted immediately after the previous update is delivered, with no deliberate waiting and no preemption. Each update is sent on a fixed single route. For each simulation scenario, we report the better of the two fixed-route zero-wait policies, avoiding bias from an arbitrary route choice.
\end{itemize}

All policies are evaluated under the same total average-cost criterion, and their average AoI is reported to assess freshness.

\subsection{Parameter Settings}
\label{subsec:param_settings}

We consider five scenarios in which the delay of Route~1 follows a Gamma distribution and that of Route~2 follows a log-normal distribution. Let $\mu_i\triangleq\mathbb{E}[Y_i]$ and $\sigma_i\triangleq\operatorname{Std}(Y_i)$ denote the mean and standard deviation of the delay of Route~$i$. In $S_1$ and $S_5$, the route delays have different means and standard deviations, whereas $S_2$--$S_4$ match these moments to isolate the effects of distributional shape and cost. The parameters are summarized in Table~\ref{tab:sim_params}.

\begin{table}[t]
\centering
\caption{Simulation parameter settings.}
\label{tab:sim_params}
\begin{tabular}{c|cc|cc|cc}
\hline
Scenario & $\mu_1$ & $\sigma_1$ & $\mu_2$ & $\sigma_2$ & $c_s$ & $c_{\mathrm{tx}}$ \\
\hline
$S_1$ & $16$ & $\sqrt{4000}$ & $4$ & $\sqrt{1000}$ & $1$ & $1$ \\
$S_2$ & $2$ & $\sqrt{10}$ & $2$ & $\sqrt{10}$ & $1$ & $0.1$ \\
$S_3$ & $2$ & $\sqrt{10}$ & $2$ & $\sqrt{10}$ & $0.1$ & $0.1$ \\
$S_4$ & $2$ & $\sqrt{10}$ & $2$ & $\sqrt{10}$ & $0.1$ & $1$ \\
$S_5$ & $2$ & $6$ & $4$ & $\sqrt{10}$ & $5$ & $1$ \\

\hline
\end{tabular}
\end{table}
We use $\Delta t=0.1$, $y_{\max}=10$, and $|\Theta|=60$. Reported costs are estimated from independent Monte Carlo trajectories of $10^7$ events and agree with policy-iteration values within $2\%$ across all five scenarios. Halving $\Delta t$ changes the computed cost by at most $0.34\%$, while increasing $y_{\max}$ by $50\%$ changes it by less than $0.01\%$.

\subsection{Numerical Results and Discussion}
\label{subsec:numerical_results}
Fig.~\ref{fig:main_results} shows that the proposed policy achieves the lowest total average cost in all scenarios and the lowest average AoI except in $S_5$, despite optimizing the joint freshness-cost objective rather than AoI alone. Relative to JSR, it permits both simultaneous route use and preemption. In $S_1$, it uses only one route yet reduces the average cost from $32.65$ to $2.459$ and the average AoI from $32.59$ to $1.275$, highlighting the benefit of preemption-enabled route selection. The proposed policy also outperforms AoI-optimized UoWV in total cost throughout, while ZW performs worst overall.
%
%\begin{table}[h]
%\centering
%\caption{Performance comparison across policies.}
%\label{tab:main_results}
%\begin{tabular}{llccccc}
%\toprule
%\textbf{Policy} & \textbf{Metric} & $S_1$ & $S_2$ & $S_3$ & $S_4$ & $S_5$ \\
%\midrule
%          & Avg. Cost & 2.456 & 2.207 & 1.300 & 2.480 & 4.106 \\
%\rowcolor{gray!10} \multirow{-2}{*}{Proposed} & Avg. AoI  & 1.270 & 1.421 & 0.831 & 1.525 & 2.145 \\
%\midrule
%          & Avg. Cost & 32.65 & 4.966 & 4.709 & 4.966 & 8.139    \\
%\rowcolor{gray!10} \multirow{-2}{*}{JSR}      & Avg. AoI  & 32.59 & 4.652 & 4.652 & 4.652 & 6.759 \\
%\midrule
%          & Avg. Cost & 4.964 & 2.638 & 1.878 & 3.399 & 6.279 \\
%\rowcolor{gray!10} \multirow{-2}{*}{UoWV}     & Avg. AoI  & 4.184 & 1.624 & 1.624 & 1.624 & 1.996 \\
%\midrule
%          & Avg. Cost & 131.5 & 6.05  & 5.6   & 6.05  & 8.75  \\
%\rowcolor{gray!10} \multirow{-2}{*}{ZW}       & Avg. AoI  & 131   & 5.5   & 5.5   & 5.5   & 7.25  \\
%\bottomrule
%\end{tabular}
%\end{table}

\begin{figure}[t]
    \centering
    \includegraphics[width=\linewidth]{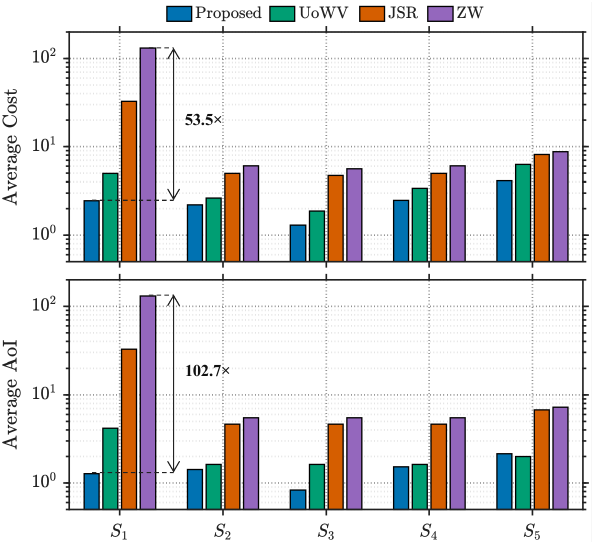}
    \caption{Performance comparison across policies. In $S_1$, ZW's average cost and AoI are $53.5\times$ and $102.7\times$ those of the proposed policy, respectively.}
    \label{fig:main_results}
\end{figure}

\begin{figure}[t]
    \centering
    \includegraphics[width=\linewidth]{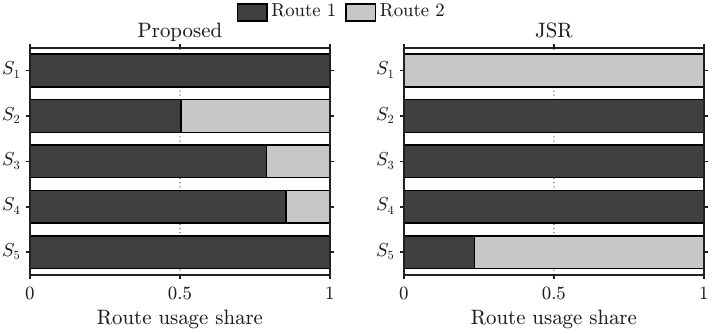}
    \caption{Route usage shares under the proposed policy and JSR, measured as fractions of transmission initiations. In $S_1$, the proposed policy uses Route~1 exclusively despite its larger mean delay and delay variance.}
    \label{fig:route_usage}
\end{figure}

Scenario~$S_1$ illustrates the paper's main qualitative message. Despite Route~1's larger mean and variance, Fig.~\ref{fig:route_usage} shows that the proposed policy uses it exclusively, whereas JSR uses only Route~2. Under preemption, the full service-time distribution matters: Route~1 places more probability mass on very small delays, allowing the proposed policy to exploit fast realizations while preempting long ones. Thus, route usefulness cannot, in general, be inferred from mean and variance alone. Relative to ZW, the proposed policy reduces average cost and AoI by factors of approximately $53.5$ and $102.7$, respectively.

Scenarios~$S_2$--$S_4$ match the routes in mean and standard deviation to isolate the effects of distributional shape and cost. As transmission becomes costlier relative to sampling, the proposed policy increasingly favors Route~1, whereas JSR uses it exclusively in all three scenarios, consistent with \cite[Lemma~7]{atasayar2025fresh}. Thus, preemption and simultaneous route use produce cost-dependent routing behavior even when the routes have identical first and second moments.

Scenario~$S_5$ represents a high-cost regime. UoWV achieves a slightly lower average AoI than the proposed policy ($1.996$ versus $2.148$), as it is more aggressive in using transmissions and does not optimize the joint freshness-cost tradeoff. However, the proposed policy reduces the total average cost from $6.279$ to $4.11$, corresponding to about a $35\%$ reduction.

\section{Conclusion}

This paper studied AoI optimization over $N$ heterogeneous routes with joint control of sampling, route activation, and preemption. We derived a vector-form integral ACOE and numerically evaluated the $N=2$ case. The results show that allowing simultaneous route usage and preemption can substantially improve freshness-cost performance. Preemption also changes routing preferences, enabling a route that appears inferior in both mean delay and delay variance to become the preferred choice.

\appendix
\section{Proof of Theorem~\ref{thm:Nroute_acoe}}
\label{app:proof_Nroute_acoe}

\begin{proof}
Fix a state $x=(\Delta,\mathbf m,\mathbf b)$ and a planned intervention time $\tau$. By independence, $S_{\mathbf m}(t\mid\mathbf b)$ is the probability that no busy route completes by time $t$, while $q_i(t\mid\mathbf m,\mathbf b)$ is the density that Route~$i$ completes first at time $t$. Simultaneous completions have probability zero because the service-time distributions are absolutely continuous.

Let $T(x)$ denote the first completion time among the currently busy routes, with $T(x)=+\infty$ when $\mathbf m=\mathbf0$. Until the first completion or the planned intervention, the AoI is $\Delta+t$. Hence, the expected running cost relative to $\rho^\star$ is
\begin{equation}
\begin{aligned}
\mathbb E\!\left[
\int_0^{\min\{\tau,T(x)\}}
(\Delta(t)-\rho^\star)\,dt
\right]
&=
\int_0^\tau
(\Delta+t-\rho^\star)
S_{\mathbf m}(t\mid\mathbf b)\,dt\\
&=
(\Delta-\rho^\star)A_{\mathbf m}(\tau\mid\mathbf b)
+J_{\mathbf m}(\tau\mid\mathbf b).
\end{aligned}
\label{eq:Nroute_running_cost}
\end{equation}
If a busy route $i$ with $m_i=1$ completes first at time $t<\tau$, the state becomes $\Psi_i(x,t)$, yielding the continuation term
\[
\int_0^\tau q_i(t\mid\mathbf m,\mathbf b)
h\bigl(\Psi_i(x,t)\bigr)\,dt.
\]
Summing over the busy routes $i:m_i=1$ accounts for every possible completion before $\tau$; the sum is empty when $\mathbf m=\mathbf0$.

With probability $S_{\mathbf m}(\tau\mid\mathbf b)$, no route completes before $\tau$. The controller then selects $\mathbf a\in\mathcal A$, incurs $K(\mathbf a)$, and moves to $\Gamma_{\mathbf a}(x_\tau)$. The optimal conditional continuation cost on this event is therefore
\[
\inf_{\mathbf a\in\mathcal A}
\left[K(\mathbf a)+h\bigl(\Gamma_{\mathbf a}(x_\tau)\bigr)\right].
\]
Combining the running, completion, and intervention terms and minimizing over $\tau\geq0$ yields \eqref{eq:Nroute_acoe}.
\end{proof}

%\begin{acks}
    %This work is supported by the European Union through ERC Advanced Grant 101122990-GO SPACE-ERC-2023-A. Views and opinions expressed are however those of the author(s) only and do not necessarily reflect those of the European Union or the European Research Council Executive Agency. Neither the European Union nor the granting authority can be held responsible for them.
%\end{acks}

\bibliographystyle{ACM-Reference-Format}
\bibliography{references}

\appendix
\end{document}